\newtheorem{theorem}{Theorem}
\newcommand{\beql}[1]{
  \begin{equation}\label{eq:#1}}
\newcommand{\eeq}{
  \end{equation}}
\definecolor{brightpink}{rgb}{1.0, 0.0, 0.5} 
\begin{document}

\author{Niels Gresnigt}
\email{niels.gresnigt@xjtlu.edu.cn}
\affiliation{Department of Physics, Xi'an Jiaotong-Liverpool University, 215123 Suzhou, China}

\author{Antonino Marcian\`o}
\email{marciano@fudan.edu.cn}
\affiliation{Center for Field Theory and Particle Physics \& Department of Physics, Fudan University, 200433 Shanghai, China}
\affiliation{Laboratori Nazionali di Frascati INFN, Frascati (Rome), Italy, EU}

\author{Emanuele Zappala}
\email{zae@usf.edu, emanuele.amedeo.zappala@ut.ee}
\affiliation{Institute of Mathematics and Statistics, University of Tartu, Tartu, Estonia, EU}

\title{Braided matter interactions in quantum gravity via $1$-handle attachment}

\begin{abstract}
\noindent
In a topological description of elementary matter proposed by Bilson-Thompson, the leptons and quarks of a single generation, together with the electroweak gauge bosons, are represented as elements of the framed braid group of three ribbons. By identifying these braids with emergent topological excitations of ribbon networks, it has been possible to encode this braid model into the framework of quantum geometry provided by loop quantum gravity. One major hurdle in this promising approach to unifying matter and spacetime has been the difficulty in implementing any dynamics that reflects observed particle interactions. In the case of trivalent networks, it has not been possible to generate particle interactions, because the braids correspond to noiseless subsystems, meaning they commute with the evolution algebra generated by the local Pachner moves. In the case of tetravalent networks, interactions are only possible when the model's original simplicity, in which interactions take place via the composition of braids, is sacrificed. The main result of the present paper is to demonstrate that it possible to preserve both the original classification of fermions, as well as their interaction via the braid product, if we embed the braid in a trivalent scheme, and supplement the local Pachner moves, with a non-local and graph changing 1-handle attachment. Moreover, we use Kauffman-Lins recoupling theory to obtain invariants of braided networks that distinguish topological configurations associated to particles in the Bilson-Thompson model.

\end{abstract}

\maketitle

\section{Introduction}
\label{Sec:Introduction}

In the topological matter model proposed by Bilson-Thompson \cite{BT}, the leptons and quarks of the Standard Model (SM) are interpreted as simple braids of three twisted ribbons, bound together at the top and bottom by a parallel disk. The three ribbons are allowed to braid each other, which allows the ribbons to be distinguished by their relative crossings. With the ribbons distinguished in this way, the \textit{twist structure} of the ribbons accounts for the electrocolor symmetries. The \textit{braid structure} on the other hand encodes the weak symmetry and chirality, with a top to bottom reflection corresponding to mapping between particles and anti-particles, and a left to right reflection corresponding to a parity transformation. The electroweak interactions are proposed to be represented topologically via braid composition (such a process however ignores the disks to which the three ribbons are connected at both the top and bottom). \\

Although the model provides an appealing way to encode the quantum numbers associated with leptons and quarks in terms of simple topological features, and represents their electroweak interactions via the simple topological process of braid composition, one missing element of the model is an explanation of where these braids live. Furthermore, the model does not provide a dynamical framework. \\

In loop quantum gravity (LQG), the states correspond to spin networks. In the case of a non-zero cosmological constant, the edges of these spin networks are framed, generalizing the spin network to a ribbon network, in which (for trivalent networks) the edges and vertices become ribbons and disks respectively. Such ribbon networks provide a natural home for the braids in \cite{BT}, and embedding these braids into ribbon-networks could address one of the weaknesses of LQG as a theory for unification; namely the lack of a particle spectrum. \\
Indeed, soon after the proposal in \cite{BT}, it was shown that the proposed braided states correspond precisely to the simplest emergent conserved topological excitations of ribbon networks in a background topological space (three manifold), thereby providing an embedding of Bilson-Thompson's model into quantum gravity theories in which states are labeled by diffeomorphism classes of embeddings of ribbon graphs in a background topological space \cite{BT2}. \\ 

The dynamics in these theories are generated by local evolution moves on the graphs. These evolution moves on trivalent spin networks are dual to the Pachner moves that relate different triangulations of the same surface. In the case of ribbon-networks, these moves are suitably adopted to account for the framing of the spin networks. \\

In this trivalent setting, a braid occurs as a single trivalent node from which three ribbons emerge. The ribbons twist and braid around each other, and then join to the rest of the network at three other trivalent nodes\footnote{Braids, as sub-graphs of extended 1-complexes, require at least four trivalent nodes to be connected to the embedding graph. Differently, isolated tangles can be achieved by interconnecting two trivalent nodes.}. One then hopes that the dynamics governing particle interactions arises as a consequence of the dynamics of ribbon networks. However, one finds that these subsystems of ribbon networks correspond to noiseless subsystems, familiar from quantum information theory. Noiseless subsystems correspond to subsets of states of a quantum system that are preserved under the evolution algebra of the system. In the present context this means that under the standard local exchange and expansion evolution moves, the braiding and twisting content of these subsystems are preserved (they commute with the evolution moves), and therefore provide conserved degrees of freedom and associated quantum numbers.\\

As noiseless subsystems, the elementary particles are protected by symmetries in the dynamics, and therefore remain coherent as they propagate. Although this is consistent with the observed stability of these particles, at the same time it prevents the emergence of genuine particle interactions, therefore hampering the development of a fully dynamical theory, because the states are too strongly conserved. In order to introduce a nontrivial dynamics for braided particles, it is necessary to consider moves such that the braided particles do not constitute noiseless subsystems.\\

%

To overcome the serious limitation of braids being conserved too strongly, the tetravalent scheme was developed. The topological excitations to be identified with particles are still three ribbon braids, but now formed by the three common edges of two adjacent tetravalent nodes. The braids can therefore be considered as an insertion in an edge of the ribbon graph. The subsequent dynamics based on the (adapted) dual Pachner moves gives rise to forms of braid propagation and interaction that are in certain instances analogues to the dynamics of particles. This is because in the tetravelent scheme, the embedded braids no longer correspond to noiseless subsystems, meaning that a braid's structure can undergo changes during its propagation. In an interaction of two adjacent braids, one can merge with the other through a series of evolution moves. This braid merger however does not correspond to the familiar composition of braids. Furthermore, in the tetravalent scheme one loses an obvious identification between specific braids and the leptons and quarks of the SM. In the trivalent scheme \cite{BT} maps the trivalent braids to SM fermions. However the tetravalent case generates an infinite range of equivalence classes of braids, and sufficient super-selection rules to select appropriate braids to map to SM fermions are lacking. \\

Both the trivalent and tetravalent scheme have their advantages, whilst at the same time introducing limitations. Although the trivalent scheme successfully establishes a correspondence between braids and SM particles, and the tetravalent scheme provides a dynamical theory of interactions ruled by topological conservation laws, neither scheme does both. The trivalent scheme lacks dynamics, where the tetravalent scheme is plagued by an infinite range of equivalence classes of braids. In particular, neither theory is capable of representing the electroweak interaction in the spirit of \cite{BT}, that is via the braid product. A comprehensive review of both the trivalent and tetravalent scheme can be found in \cite{BT3}.\\

In this paper, we consider the trivalent scheme and generate interactions that correspond topologically to braid compositions by complementing the Pachner moves by a new move corresponding to adding a 1-handle. This new move, and hence the particle interaction, is not a local one (like the dual Pachner moves) and so two particles, separated in the ribbon network are able to interact at the topological level. Attaching handles corresponds to having a $4$-dimensional cobordism from an initial state embedded into a $3$-dimensional manifold to a final state in a $3$-dimensional manifold. This cobordism is obtained from crossing the base space, which is $3$-dimensional, with a time coordinate. The construction is therefore inherently $4$-dimensional. Unlike the Pachner moves, which are graph changing but leave the underling manifold unchanged, attaching a 1-handle changes the underlying manifold. Our proposal restores one of the most attractive features of Bilson-Thompson's model, that of representing particle interactions topologically as braid composition, whilst at the same time allowing the model to be embedded into background independent theories of quantum gravity. The identifications of leptons and quarks correspond to those in \cite{BT}, and electroweak interactions are represented topologically via the braid product. This is the first main result of this paper. \\

Finally we argue that, instead of using a link invariant to distinguish between different particles, as has been done previously \cite{BT2}, the algebraic invariant of isotopy classes of braided networks based on Kauffman-Lins recoupling theory \cite{KL} provides a more appropriate choice invariant, being able to distinguish between the different particles. This is the second main result of this paper. \\

\section{A topological model of composite preons}
\label{sec:Braidmodel}
We begin by providing a brief overview of the Bilson-Thompson model \cite{BT}. The fundamental preonic object is a ribbon which may be twisted by $\pm 2\pi$. Ribbons are combined into triplets with non-trivial braiding, and connected together at the top and bottom to a parallel disk. Such a braid constitutes a two-dimensional surface. It is assumed that ribbons with twists in opposite directions are not permitted in the same triplet. The resulting topological objects, corresponding to framed braids in the circular Artin braid group $B_3^c$, are called 3-belts. With a simple (but arbitrary) choice for how the ribbons are braided, the braids in Figure \ref{helonmodel}, represent the first generation of SM leptons and quarks.
\begin{figure}[h!]
\centering
 \includegraphics[scale=0.5]{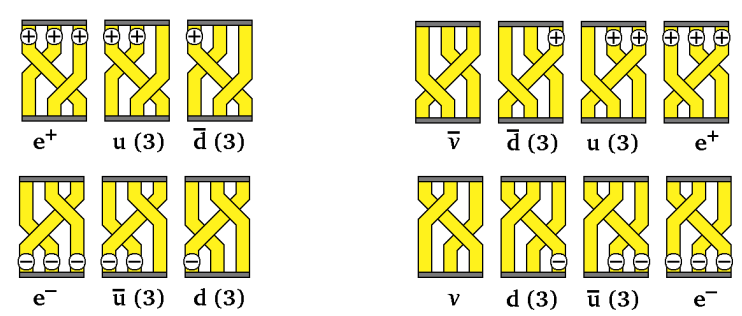}
\caption{\footnotesize{In the braid model, leptons and quarks are represented as braids of three (possibly twisted) ribbons. Charged fermions come in two handedness states whereas the neutrino and antineutrino come in only one handedness state. Source \cite{BT}} \label{helonmodel}}
\end{figure}
In this representation, the \textit{twist structure} of the ribbons accounts for the electrocolor symmetries, with charges of $\pm e/3$ represented by $\pm 2\pi$ twists, and the permutations of twisted ribbons representing color. This braid model thus allows one to describe electric charge and colour charge in terms of the topological structure of braids. Without any underlying braiding of the three ribbons, a simple rotation of both end disks means the ribbons become indistinguishable from one another, and one loses the ability to represent the color symmetry. No explicit assumption is made about the permitted braiding of ribbons, and a well-motivated choice for the braid structure to assign to leptons and quarks is still lacking\footnote{Some recent works by one of the present authors demonstrates that the topological structures of the helon model can be generated from the minimal ideals of the two complex Clifford algebras $\mathbb{C}\ell(6)$ and $\mathbb{C}\ell(4)$, by establishing a map between the basis states of these ideals, and products of braid generators in the circular Artin braid group $B_3^c$ (for $\mathbb{C}\ell(6)$) and $B_3$ (for $\mathbb{C}\ell(4)$). In such a construction, the finite-dimensionality of the minimal ideals provides a natural limitation for the complexity of twisting of ribbons, as well as the braiding of ribbons \cite{NGG1,NGG2,NGG3}.}. The model in Figure \ref{helonmodel} chooses just one simple possibility. More complex braiding may be speculated to correspond to additional generations. However, without any restriction on the complexity of braiding, this leads to an unbounded number of generations.\\

The model does not provide a dynamical framework. Weak interactions however may be represented topologically via braid composition, as shown in Figure \ref{composition}, where the electroweak bosons are assumed to correspond to unbraids (with possible twisting on its constituent ribbons). Forming the braid product requires that the bottom disk of the first composing braid, and the top disk of the second composing braid be removed. One limitation of the original model is that this issue is not addressed. In section \ref{Surgery} we show that attaching a 1-handle naturally resolves this issue.
\begin{figure}[h!]
\centering
\includegraphics[scale=0.5]{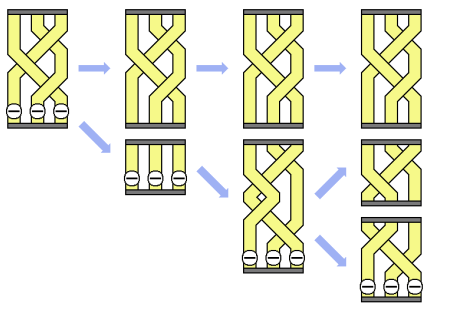}
\caption{\footnotesize{Weak interactions are represented topologically via braid composition. Source \cite{BT}}} \label{composition}
\end{figure}

\begin{figure}[h!]
\centering
 \includegraphics[scale=0.4]{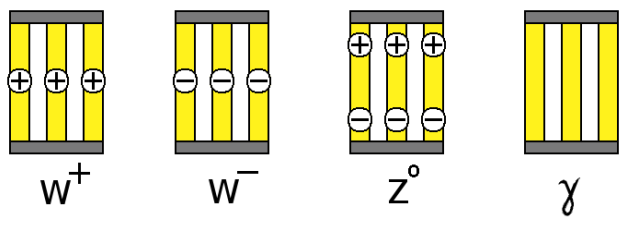}
\caption{\footnotesize{Electroweak bosons are assumed to correspond to triplets of ribbons that are not braided. Source \cite{BT}}} \label{bosons}
\end{figure}




\section{Theories on ribbon graphs}

We now define and summarise the class of theories that we focus on. Much of what follows in this section can be found discussed at greater length in \cite{BT2}. We are interested in theories of ribbon graphs. These relate closely to LQG and spin foam models, but in which the graphs that comprise the states in these theories are framed. The states of these theories then are represented by the embedding of a two surface into the spatial manifold. \\

A two-dimensional surface (with boundary) $S$ of genus at least two in a compact three manifold $\Sigma$, can be considered as a union of trinions, which are surfaces with three distinct region of connection to other surfaces. These trinions ultimately represent the framing of trivalent spin network nodes. A ribbon graph $\Gamma$ then corresponds to a particular decomposition of $S$. \\

By associating a finite-dimensional state space $\mathcal{H}_{t}$ to each trinion, and the tensor product operation to the glueing of trinions, it is possible to construct a map from the ribbon graph to a quantum system. In this paper we do not focus on the coloring of the graph, and therefore ignore the association of a (quantum) group element to each edge of a trinion. Instead it suffices to use a functor to assign a finite-dimensional vector space to each trinion. The state space associated with the entire graph $\Gamma$ is then given by
\begin{equation}
    \mathcal{H}_{\Gamma}=\bigotimes_{t\in\Gamma} \mathcal{H}_t,
\end{equation}
where the label $t$ runs over all trinions. The state space of the theory is subsequently a sum over all the topologically distinct embeddings of all such ribbon graphs in $\Sigma$ with inner product $\langle \Gamma | \Gamma' \rangle=\delta_{\Gamma\Gamma'}$
\begin{equation}
    \mathcal{H}=\bigoplus_{\Gamma}\mathcal{H}_{\Gamma}.
\end{equation}

\subsection{The evolution algebra}

By embedding the braids representing the leptons and quarks in \cite{BT} in this manner, it becomes possible to define local dynamics by excising subgraphs of the ribbon network $\Gamma$ and replacing them with new ones. Because trivalent spin networks are the dual skeletons of triangulations of 2D surfaces in which a node is dual to a 2-simplex, the evolution moves of trivalent spin networks which generate such dynamics are dual to the Pachner moves that relate triangulations. In the case of a ribbon network, these Pachner moves need to be suitably adapted. There are three non-trivial generators $A_i$ of evolution, which together with the identity generate the evolution algebra on the sate space $\mathcal{H}$
\begin{equation}
    \mathcal{A}_{evol}=\{1, A_i\},\quad i=1,2,3\,,
\end{equation}
For a given ribbon graph $\Gamma$, the application of $A_i$ results in
\begin{equation}
    \hat{A_i}|\Gamma \rangle=\sum_\alpha |\Gamma'_{\alpha i}\rangle,
\end{equation}
where $\Gamma'_{\alpha i}$ represent the ribbon graphs that can be obtained from $\Gamma$ via an application of one of the evolution moves. \\

A key observation in \cite{BT2} is that any braiding and twisting content of states in the state space $\mathcal{H}$ remains invariant under such evolution of the ribbon graph, as generated by the evolution algebra. This means that any physical information that is encoded in the braiding will necessarily be conserved under evolution. In other words, the evolution algebra only acts non-trivially on the trinions of the graph, whereas any braids embedded into the ribbon graph are noiseless, commuting with the evolution algebra. 

\section{Leptons and quarks as topological excitations}

\subsection{The trivalent scheme}

What was shown in \cite{BT2} is that the braids proposed in \cite{BT}, representing leptons and quarks, correspond to the simplest conserved topological excitations in a trivalent network. By conserved, we mean that the generators of such dynamics (the adapted dual Pachner moves) commute with the braiding and twisting content of the topological excitations, and the latter therefore correspond to noiseless subsystems of the quantum dynamics. This means that although the simplest excitations propagate coherently, they are too strongly conserved since they do not destabilise \cite{BT2}. Thus, although it is possible to describe scattering, the creation and annihilation of particles is impossible  without some  modification \cite{JH}.

\subsection{The tetravalent scheme}

To overcome this serious limitation, and because of the geometrical correspondence between framed tetravalent spin networks and 3-space, a tetravalent scheme was subsequently developed \cite{YW,SW}. In this scheme, the topological excitations to be identified with particles are still three strand braids, but now formed by the three common edges of two adjacent tetravalent nodes. The braids can therefore be considered as insertion in an edge of the ribbon graph. 
\\

In this case, the dynamics based on the (adapted) dual Pachner moves naturally associated with tetravalent graphs gives rise to forms of braid propagation and interaction that are in some instances analogous to the dynamics of particles. More precisely, the theory gives rise to different classes of braids. One class of braids neither propogates nor interacts. A second class of braids propagate but do not interact. In this case, one finds that the propagation is chiral, meaning that some braids can only propagate to their right in relation to the local subgraph, whilst others only propagate to their left. Such braids are speculated to correspond to fermions. A final class of braids consists of actively interacting braids that are two-way propagating. These braids are capable of merging with neighboring braids when certain interaction conditions are met, and are speculated to correspond to bosons \cite{SW}. Exchanges of such actively interacting braid excitations give rise to the interactions between fermionic braids.\\

Dynamics generated in this manner strongly constraints the number of possible discrete transformations in the tetravalent scheme to be exactly seven (excluding the identity), corresponding to C, P, T, and their product, and the interactions of braids are found to be invariant under C, P, and T separately, and thus also under CPT \cite{HW2}. Braid Feynman diagrams have been developed and used to represent the dynamics of braids, with the hope that an effective theory describing braid dynamics can be based on these braid Feynman diagrams \cite{YW2}.\\

As the above paragraphs indicate, the tetravalent formalism has proven to be more adapt at generating dynamics than the trivalent formalist, but not without introducing a new set of challenges to overcome. In the tetravelent scheme one lacks a means to pick out those tetravalent braids that should be mapped to the SM particles. Whereas in the trivalent case, such a map is provided by the model in \cite{BT}, with each equivalence class of braids being mapped to a single type of particle, the tetravalent case produces a seemingly infinite range of (equivalence classes of) braid states analogous to bosons and fermions.\\

\section{Ribbon networks embedded into 3-manifolds}

We suppose that particle systems constitute {\it ribbon graphs} in which vertices are thickened to be (homeomorphic to) two dimensional disks (or rectangles), and edges are thickened to ribbons. These objects are embedded in a 3-dimensional ground space, which is taken to be a 3-manifold, and they are also referred to as {\it fat graphs} in the literature. See for instance \cite{Igusa,MP}.   \\

Ribbon graphs are formally defined as oriented surfaces decomposed into two families of rectangular surfaces, i.e. embeddings of the square $[0,1]\times [0,1]$ in the base 3-dimensional manifold, called ``{\it coupons}'' and ``{\it ribbons}'', and a family of ``{\it annuli}'', i.e. embeddings of the cylinder $S^1\times [0,1]$. We will think of coupons as vertices of a graph, thickened into rectangles or, equivalently a $2$-dimensional closed ball, while ribbons will constitute the connecting edges of a graph, thickened to rectangles. Annuli determine the components of framed links, and will therefore be given with a trivialization of the normal bundle, which is described by an integer corresponding to the twisting. Moreover, it is assumed that coupons, ribbons and annuli do not self-intersect, but coupons and ribbons are allowed to meet at one side of the boundary of the rectangular surface. Figure~\ref{fig:ribbongraph} shows a portion of ribbon graph where a central coupon, rectangular and corresponding to a ``fat vertex'' of the graph, is connected to ribbon edges on top and bottom. This is the fundamental unit of a ribbon network considered below, as we can concatenate them by joining their ribbons. \\

\begin{figure}
    \centering
    \begin{tikzpicture}
		
		\draw (0,2) -- (1,0);
		\draw (0.2,2) -- (1.2,0);
		\draw (0,2)--(0.2,2);
		
		\draw (1.5,2) -- (1.5,0);
		\draw (1.7,2) -- (1.7,0);
		\draw (1.5,2)--(1.7,2);
		
		\draw (3,2) -- (2,0);
		\draw (3.2,2) -- (2.2,0);
		\draw (3,2)--(3.2,2);
		\draw[dashed] (0.5, 1.8) -- (1.4,1.8); 
		\draw[dashed] (1.8,1.8) -- (2.7,1.8);
		\draw (0,0) -- (3.2,0);
		\draw (0,-1) -- (3.2,-1);
		\draw (0,0) -- (0,-1);
		\draw (3.2,0) -- (3.2,-1);
		\draw (1,-1) -- (0,-3);
		\draw (1.2,-1) -- (0.2,-3);
        \draw (0,-3)--(0.2,-3);
		
		\draw (1.5,-1) -- (1.5,-3);
		\draw (1.7,-1) -- (1.7,-3);
		\draw (1.5,-3)--(1.7,-3);
		
		\draw (2,-1) -- (3, -3);
		\draw (2.2,-1) -- (3.2,-3);
		\draw (3,-3)--(3.2,-3);
		\draw[dashed] (0.5, -2.8) -- (1.4, -2.8);
		\draw [dashed] (1.8,-2.8) -- (2.7,-2.8);
		
			\end{tikzpicture}
    \caption{Example of ribbon graph. The top ribbons and bottom ribbons have one side connected to the central rectangular coupon, homeomorphic to a disk}
    \label{fig:ribbongraph}
\end{figure}
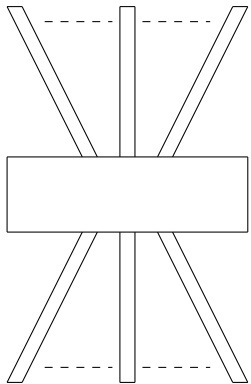

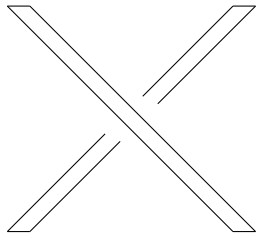
\begin{figure}
    \centering
    \begin{tikzpicture}
		\draw (0,3) -- (3,0);
		\draw (0.3,3) -- (3.3,0);
		
		\draw (3,3) -- (1.8,1.8);
		\draw (1.3,1.3) -- (0,0);
		
		\draw (3.3,3) -- (2,1.7);
		\draw (1.5,1.2) -- (0.3,0);
		
		\draw (0,3) -- (0.3,3);
		\draw (3,3) -- (3.3,3);
		\draw (0,0) -- (0.3,0);
		\draw (3,0) -- (3.3,0);
		
				\end{tikzpicture}
    \caption{A positive crossing of two ribbons, oriented downward}
    \label{fig:crossing}
\end{figure}
Two ribbon graphs are said to be equivalent if there exists an isotopy that transforms one into the other. Equivalence of ribbon graphs is conveniently expressed in combinatorial terms by considering diagrams of ribbon graphs, i.e. projections on the plane. In terms of their diagrams, two ribbon graphs are equivalent if and only if there exists a finite sequence of plane isotopies and moves of type:
\begin{itemize}
    \item 
    Oriented framed Reidemeister moves;
    \item 
    Ribbons can slide above and below coupons;
    \item
    Opposite crossings cancel, after orientation turning is applied;
    \item
    Coupons can be rotated with surface parallel to the plane, so that the attached ribbons spiral around the coupons.
\end{itemize}
This combinatorial moves correspond to the generators of the category of ribbon graphs in \cite{RT}, relations ${\rm rel}_1$ to ${\rm rel}_{13}$. \\

In analogy with spin networks in quantum gravity, we call {\it ribbon networks} the elements of an appropriate subclass of ribbon graphs (introduced below) considered in this article. We assume that our ribbon networks are embedded into $3$-manifolds. A $3$-manifold $M$ is thought of as a temporal fragment of space-time, in the sense that it represents the spatial coordinates at a fixed time. A network embedded in $M$ represents a physical system at time $t_0$. Figure~\ref{fig:embeddednetwork} illustrates a portion of a trivalent network embedded into a $3$-manifold corresponding to an arbitrary braid $B$ of Bilson-Thompson's model.\\ 

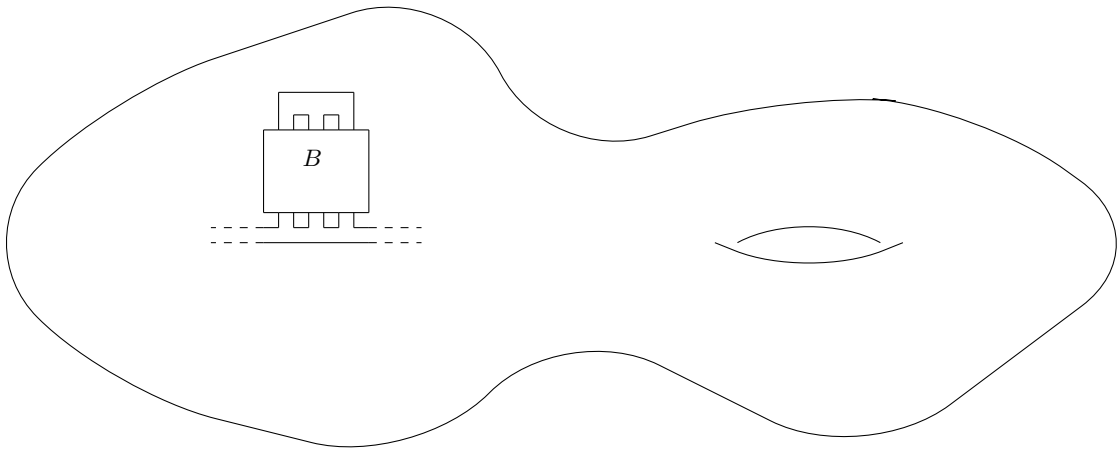
\begin{figure}
    \centering
    \begin{tikzpicture}
					\draw[rounded corners=40pt](6,-2)--(4,-1)--(2,-3)--(-2,-2)--(-4,0)--(-2,2)--(2.5,3.5)--(3.8,1)--(7,2)--(9.5,1.8)--(12,0)
					--(8,-3)--(6,-2);
					
					\draw[rounded corners=30] (6,0)--(7.25,-0.5)--(8.5,0); 
					\draw[rounded corners=30] (6.3,0)--(7.25,0.5)--(8.2,0);
					
					\draw (0,0)--(1.4,0);
					\draw (0,0.2)--(0.2,0.2);
					\draw (0.4,0.2)--(0.6,0.2);
					\draw (0.8,0.2)--(1,0.2);
					\draw (1.2,0.2)--(1.4,0.2);
					
					\draw (0.2,0.2)--(0.2,0.4);
					\draw (0.4,0.2)--(0.4,0.4);
					\draw (0.6,0.2)--(0.6,0.4);
					\draw (0.8,0.2)--(0.8,0.4);
					\draw (1,0.2)--(1,0.4);
					\draw (1.2,0.2)-- (1.2,0.4);
					
					\draw (0.2,2)--(1.2,2);
					\draw (0.2,1.5)--(0.2,2);
					\draw (1.2,1.5)--(1.2,2);
					
					\draw (0.2,1.5)--(0,1.5);
					\draw (1,1.5)--(1.4,1.5);
					\draw (0,0.4)--(0.2,0.4);
					\draw (1,0.4)--(1.4,0.4);
					\draw (0,0.4)--(0,1.5);
					\draw (1.4,0.4)--(1.4,1.5);
					\draw (0.2,0.4)--(1,0.4);
					\draw (0.2,1.5)--(1,1.5);
					
					\draw (0.4,1.5)--(0.4,1.7);
					\draw (0.6,1.5)--(0.6,1.7);
					\draw (0.8,1.5)--(0.8,1.7);
					\draw (1,1.5)--(1,1.7);
					\draw (0.4,1.7)--(0.6,1.7);
				    \draw (0.8,1.7)--(1,1.7);
					
					\draw[dashed] (0,0)--(-0.7,0);
					\draw[dashed] (0,0.2)--(-0.7,0.2);
					
					\draw[dashed] (1.4,0)--(2.1,0);
					\draw[dashed] (1.4,0.2)--(2.1,0.2);
					
					 \node (a)  at (60:1.3) {$B$};
					\end{tikzpicture}
    \caption{Fragment of trivalent ribbon network embedded in a $3$-manifold. The box named ``$B$'' corresponds to a braid of Bilson-Thompson model}
    \label{fig:embeddednetwork}
\end{figure}

 We take our ``admissible'' ribbon networks to be closed trivalent ribbon graphs, i.e. ribbon networks with no open edges whose vertices have three incident edges. We allow, moreover, disconnected annuli, i.e. loops, knotted with the rest of a ribbon network. In other words, an admissible ribbon network is a compact (not necessarily connected) surface with boundary whose corresponding spine is a trivalent graph embedded in $3$-space. A vertex coupon is represented as in Figure~\ref{fig:trinion}, and will be called a {\it trinion}. We consider isotopy classes of braided networks, and therefore do not distinguish between representatives of the same class. An isotopy class is determined combinatorially by a slightly modified version of the moves described above. In fact, the set of moves relating diagrams of isotopic surfaces with boundary embedded in a $3$-manifold has been determined in \cite{Matsuzaki} in both the oriented and non-oriented case. Since Bilson-Thompson's framed braids are all oriented, it is reasonable to require that our ribbon graphs, and hence ribbon networks, are all taken to be oriented. In the oriented case, Matsuzaki's moves \cite{Matsuzaki} correspond to the framed Reidemeister moves, as given for example in \cite{Ohtsuki}, Figure~1.8, slide of trinions over and underneath ribbons, and Ishii's IH move in \cite{Ishii}. In fact, the latter move has been considered in \cite{BT2} under the name of exchange move. We remark that using Matsuzaki's moves, in principle, the procedure of trading braiding for twists discussed in \cite{BT2009} is not allowed, as the oriented moves do not include vertex twists. Roughly speaking, we always look at one side of the ribbons, without allowing half-twists. This is consistent with the constructions found in this article, as a consequence of restricting ourselves to oriented surfaces.\footnote{As appeared in Bilson-Thompson's work, full twists are compositions of two half-twists, either positive or negative. In the present context, what is meant by a full twist is a loop with (positive or negative) self-crossing (also called {\it kink}), as in Matsuzaki's moves.} We will see below that it is useful to derive a relation between self-crossing twists and composition of half-twists in order to simplify computations. We will use this relation as a computational tool, although no isolated half-twist appears in our framework. \\

\begin{figure}
    \centering
    \begin{tikzpicture}
 	\draw[rounded corners=20pt](-1,1)--(0,-1)--(0,-2);
 	\draw[rounded corners=20pt](1,1)--(0.5,0)--(0,1); 
 	\draw[rounded corners=20pt](1,-2)--(1,-1)--(2,1);
 	\draw (-1,1)--(0,1);
 	\draw (1,1)--(2,1);
 	\draw (0,-2)--(1,-2);
 	\end{tikzpicture}
    \caption{A trivalent vertex is represented by a trinion}
    \label{fig:trinion}
\end{figure}
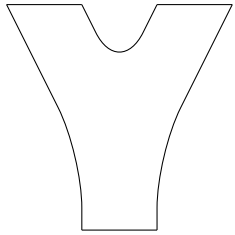
	
Lastly, recall that each $3$-manifold arises from surgery on a framed link embedded in the $3$-sphere $\mathbb S^3$, see for instance \cite{Lick}. This result, known in the literature as Lickorish-Wallace Theorem, allows us to represent a $3$-manifold $\mathcal M^3$ by a framed link $\mathcal L$ whose $\mathbb S^3$-surgery produces $\mathcal M_{\mathcal L}^3$. Using the definition of braided network, a framed link $\mathcal L$ is indeed a (possibly part of) braided network. We can therefore represent both base $3$-space and braided matter by means of braided networks as the disjoint union of a framed link $\mathcal L$ representing $\mathcal M_{\mathcal L}^3$, the $3$-dimensional base space, and matter embedded in $\mathcal M^3$ is encoded in an embedded surface with non-trivial number of trinions. In other words, space-time and matter are considered on the same footing.\\

The set of diagrammatic moves relating two framed links that correspond to diffeomorphic $3$-manifolds were obtained by Kirby \cite{Kir}, and are referred to as {\it Kirby's Calculus}. A braided network $\mathcal N$, therefore, is presented as union $\mathcal N = \Gamma\cup \mathcal L$ of a trivalent surface with boundary $\Gamma$ and a framed link $\mathcal L$, where $\mathcal L$ represents (via Lickorish-Wallace Theorem and link surgery) the base $3$-space $\mathcal M^3_{\mathcal L}$, and $\Gamma$ gives the braided matter embedded in $\mathcal M^3_{\mathcal L}$.

\subsection{Surgery on $3$-manifolds}\label{Surgery}

A cobordism is an ordered triple of manifolds \cite{Kos}, written $\{X_0,Y,X_1\}$, such that $\partial Y = X_0\bigsqcup X_1$. If the manifolds are taken to be oriented, one assumes further that the orientations of $X_i$ and the orientations induced on them by that of $Y$ coincide. \\

Recall first, the definition of handle attachment, which we hereby provide. Let $X$ be a manifold of dimension $n$, and let $\psi$ be an embedding of $\partial B^p\times B^{n-p}$ in $\partial X$, where $B^k$ indicates the standard $k$-dimensional closed ball. We say that we attach a $p$-handle to $X$ when we attach a copy of $B^p\times B^{n-p}$ along the image $\psi(\partial B^p\times B^{n-p})$. The notion of handle attachment is closely related to that of surgery on a manifold. Suppose that we have an embedding of the $k$-dimensional sphere $\phi: \mathbb S^k \longrightarrow M$, into the $n$-dimensional manifold $M$. A framing of $\phi(\mathbb S^k)$ is a (continuous) section of the normal bundle of $\phi(\mathbb S^k)$. An embedded sphere, along with a framing of its image, determine a map $\psi: \mathbb S^k\times \mathbb B^{n-k}\longrightarrow M$ up to isotopy. A $k$-surgry on $\psi$ consists of removing $\psi(\mathbb S^k\times \mathbb B^{n-k})$ from $M$, and replacing it by a copy of $\mathbb B^{k+1}\times \mathbb S^{n-k-1}$ via the map $\psi$. It is known (\cite{Kos,4Manifolds}) that a $k$-handle attachment on a $(n+1)$-dimensional manifold with boundary is equivalent to a $(k-1)$-surgery on the $n$-dimensional boundary. We will therefore, sometimes, improperly use the two names interchangeably.  \\

Let now $X$ be a manifold and consider the cylinder $X\times I$, where $I:= [0,1]$ is the closed unit interval, and let us indicate by $X_i$ the copy $X\times \{i\}$ of $X$ in $X\times I$, with $i=0,1$. We attach a $p$-handle to the boundary of $X\times I$ along a copy of $\partial B^p\times B^{n-p}\hookrightarrow X_1$. It is a known results that the manifold so obtained, is a cobordism $Y$ between $X$ identified with the copy $X_0$ of $X\times I$, and $X_1$ with surgery along the attaching sphere of the $p$-handle. We say that such a cobordism arising from attaching a handle to a given manifold is an {\it elementary cobordism}. \\

Recall, further, that given two cobordisms $\{X_0,Y,X_1\}$ and $\{X'_0,Y',X'_1\}$ where we have a diffeomorphism $X_1 \cong X'_0$ (and compatible orientations in the orientable setting), we can construct a ``composition'' cobordim $\{X_0,Y\bigcup_{X_1}Y',X'_1\}$, obtained by gluing $Y$ and $Y'$ along their (diffeomorphic) common boundary for a given choice of diffeomorphism. This operation of composition of cobordisms can clearly be repeated. We have the following traditional result, proof of which can be found in \cite{Kos}, Chapter $7$. 

\begin{theorem}\label{thm:cobordism}
Let $\mathcal C = \{X_0,Y,X_1\}$ be a cobordism. Then $\mathcal C$ can be obtained as:
$$
\mathcal C = \mathcal C_1 \cup \cdots \cup\mathcal C_t,
$$
for elementary cobordisms $\mathcal C_i = \{X^i_0,Y^i,X_1^i\}$, $i=1,\ldots , t$. 
\end{theorem}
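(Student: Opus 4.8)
The plan is to prove this classical fact via Morse theory. First I would equip the cobordism $Y$ with a Morse function $f\colon Y \to [0,1]$ satisfying $f^{-1}(0)=X_0$, $f^{-1}(1)=X_1$, and having no critical points on $\partial Y$. The existence of such a boundary-adapted function follows from the standard genericity result that Morse functions are dense in $C^\infty(Y)$; the prescribed behaviour on the two boundary components is arranged by choosing $f$ to depend only on the collar parameter in collar neighbourhoods of $X_0$ and $X_1$, and then extending generically over the interior. After a further small perturbation I would assume $f$ is \emph{excellent}, meaning all critical points are nondegenerate and distinct critical points take distinct critical values $c_1 < c_2 < \cdots < c_t$.

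Next I would slice the cobordism along regular values. Choose
\[
0 = a_0 < a_1 < \cdots < a_t = 1
\]
so that each open interval $(a_{i-1},a_i)$ contains exactly one critical value $c_i$, and each $a_i$ is a regular value of $f$. Setting $Y^i := f^{-1}([a_{i-1},a_i])$, $X_0^i := f^{-1}(a_{i-1})$, and $X_1^i := f^{-1}(a_i)$, each triple $\mathcal C_i = \{X_0^i, Y^i, X_1^i\}$ is a cobordism containing a single nondegenerate critical point. By construction $X_1^i = X_0^{i+1}$, so gluing the pieces along these common level sets recovers $Y = Y^1 \cup \cdots \cup Y^t$ and hence the decomposition $\mathcal C = \mathcal C_1 \cup \cdots \cup \mathcal C_t$ in the sense of composition of cobordisms defined above.

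It then remains to show that each $\mathcal C_i$ is \emph{elementary}, that is, that crossing a single critical point of index $p$ is the same as attaching a $p$-handle. Here I would invoke the Morse lemma to put $f$ into the local normal form $-x_1^2 - \cdots - x_p^2 + x_{p+1}^2 + \cdots + x_{n+1}^2$ in coordinates centred at the critical point (with $\dim Y = n+1$), and then use the downward gradient flow of $f$, with respect to a metric adapted to these coordinates, to construct a diffeomorphism between $Y^i$ and the cylinder $X_0^i \times I$ with a handle $B^p \times B^{\,n+1-p}$ attached along the descending sphere $\partial B^p \times B^{\,n+1-p} \hookrightarrow X_1^i$. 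This identifies $\mathcal C_i$ with an elementary cobordism, and the handle/surgery correspondence already recorded in the excerpt lets one phrase the passage equivalently as a $(p-1)$-surgery on the intermediate level set.

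The main obstacle is precisely this last step, the handle-attachment theorem, which is the technical heart of the argument. The delicate points are controlling the gradient flow near the critical point, where it degenerates and the flow lines spread out, so that one must use the Morse-lemma coordinates to track the attaching region $\psi(\partial B^p \times B^{\,n+1-p})$ explicitly; and then verifying that the resulting map is a genuine diffeomorphism restricting correctly to the boundary level sets $X_0^i$ and $X_1^i$. Once this local model is established, the global decomposition follows formally from the slicing, so the analysis near a single critical point is where essentially all of the work resides.
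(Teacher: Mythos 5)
Your proposal is correct and follows essentially the same route as the paper, which gives no proof of its own but cites Kosinski, \emph{Differential manifolds}, Chapter 7, where exactly this Morse-theoretic argument appears: existence of a Morse function adapted to the boundary, perturbation to distinct critical values, slicing at intermediate regular levels, and the handle-attachment theorem identifying a slice with a single index-$p$ critical point as an elementary cobordism. One notational slip worth fixing: the attaching region $\partial B^p \times B^{\,n+1-p}$ traced out by the descending disk embeds in the top copy $X_0^i \times \{1\}$ of the collar on the \emph{lower} boundary (what the paper's definition calls ``$X_1$'' of the cylinder $X \times I$), not in your upper level set $X_1^i$, which is the \emph{result} of the surgery.
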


In fact, in our setting we are not interested in all types of elementary cobordisms and, consequently, general types of cobordisms. We restrict ourselves to handles attachment of degree 1 to $4$-manifolds with $3$-dimensional boundary, according to certain specific rules which we now describe. Firstly, observe that if $M$ is a $3$-dimensional manifold, then $\partial B^p\times B^{4-p} = \partial B^1\times B^3$, which is diffeomorphic to two $3$-dimensional closed balls. Then, the $1$-handle attached to $M\times I$ following the procedure described above, is a copy of $B^1\times B^3$, i.e. a cylinder whose section is a $3$-dimensional ball. 

\subsection{Braid dynamics via surgery}\label{Dynamics}

Let us consider now a trivalent network $\mathcal N$ embedded in a $3$-manifold $M$. An elementary cobordism obtained via $1$-handle attachment is said to be {\it admissible} if the two embedded $3$-balls $\partial B^1\times B^3$ intersect two braids $b_1, b_2$, representing particles as in the Bilson-Thompson model, in such a way that $\partial B^1\times \partial B^2$ intersect $b_1$ and $b_2$ in exactly two triplets of disks and, moreover, the portions of $b_1$ and $b_2$ contained inside ${\rm int}(\partial B^1\times B^3)$ are isomorphic to the ball with a trivial trivalent braid attached to it, i.e. an embedded trinion. Once we attach a $1$-handle, the new manifold  $M'$ so obtained contains two open braids, which we connect via straight ribbons contained in the attached handle, in such a way that no new crossing or twisting is introduced. We therefore obtain a new manifold $M'$ containing an embedded network $\mathcal N'$ that is identical to $\mathcal N$ outside the attached $1$-handle. By construction, $\mathcal N'$ is obtained by joining $b_1$ and $b_2$ following Bilson-Thompson composition rules described in Section~\ref{sec:Braidmodel} above. The manifolds $M$ and $M'$ containing $\mathcal N$ and $\mathcal N'$, respectively, are related by an elementary (admissible) cobordism. Of course, we allow also the opposite procedure, in which we cut a handle containing a braid, with the condition that no crossing or twisting is contained in the handle, and attach back two $3$-balls in which we embed two disks along with (unbraided) ribbons joined to the braids $b_1$ and $b_2$ obtained by cutting the handle. The two processes described above, take the physical interpretation of particle fusion and decay. \\

We define the dynamical interaction in a braided network to be restricted to admissible cobordisms and merging of braids induced by them, according to the scheme described above. As a consequence, a system represented by an embedding of a network $\mathcal N$ in $M$, can evolve only into a network $\mathcal N'$ which differs from $\mathcal N$ only for a finite number of merging/cutting of braids. Observe that the dynamics inherently presents a $4$-dimensional component due to the fact that the cobordisms induced by surgery are $4$-dimensional. Time evolution is given by $4$-cobordisms and the space-time is naturally thought of as a $4$-dimensional manifold, as it is intuitively done.\\

In this picture, the exchange of bosons happens on $1$-handles that are inserted to merge braids. Suppose, for instance, that the particles represented by braids $b_1$ and $b_2$ undergo a merging induced by an admissible cobordism. Then we can slide twists of ribbons from one braid to the other, corresponding to boson exchange of type $W^{\pm}$ and $Z^0$. A photon is exchanged if there is no twisting exchange between the two braids, since the braid representing a boson $\gamma$ is trivial. We allow exchange of braiding between two particles on the handle connecting them. Dynamics is, therefore, thought of as a process in which we perform surgery on the base (space-like) $3$-manifold, exchange braiding and twisting between braids corresponding to particles, and perform opposite surgery again. In general, depending on what twisting and braiding exchange is performed during the surgery, we might produce different particles than those we started with. We assume that, during the dynamics, the restrictions on twisting and braiding considered in the Bilson-Thompson's model are still respected.\\

Our view of braided matter and its interaction through surgery of manifolds intrinsically provides a reason for the stability of fundamental particles. Decay happens on handles, so that a particle (braid) that is embedded in the $3$-manifold $M$ does not intersect any handles. Decay for these particle is automatically ruled out, so that they would not split into their constituent preons. \\

Generally speaking, representing dynamics as cobordisms in $4$-dimensions implies that the diagrams representing a braided network $\mathcal N$ need to be adapted to take into account the extra dimension. First, we assume a ``sliceness'' condition on braided networks $\Gamma\subset \mathcal N$. For trivial cobordisms $\mathcal M_{\mathcal L}^3\times I$ this means that for each $t\in I$, the surface $\Gamma_t\hookrightarrow \mathcal M_{\mathcal L}^3\times t$ at the instant $t$ is embedded and it is isotopic to the initial $\Gamma = \Gamma_0$. So, along trivial cobordisms we have that $\Gamma$ varies isotopically, which is represented diagrammatically by Matsuzaki's moves. For non-trivial cobordisms given above, we assume that when attaching handles the braided surface is isotopic to the initial $\Gamma$ outside the handle where braid exchange is happening. This, in particular, means that the surface is locally not isotopic to $\Gamma$. Dynamics is not encoded via moves that preserve the underlying topology. \\

The mid-manifold $W$ corresponds to a boson exchange, and we can think of it as being virtual, as the handle it lives on is not embedded in the base $3$-manifold, as the particles given by braided matter. Using the notion of Kirby diagrams it is possible to diagrammatically describe $1$-handle surgery and braid interaction. \\

When no braid interaction happens, i.e. when no handle is attached to the base $3$-manifold $\mathcal M$, then the isotopy class of the braided network, along with a framed link $\mathcal L$ up to Kirby moves determines the system completely. In fact, time evolution is described by a trivial cobordism where each slice $\mathcal M\times t$ is homeomorphic to the base manifold $\mathcal M$, which is determined by the framed link $\mathcal L$ up to Kirby moves. \\

In order to describe diagrammatically braid interaction, the surgery needs to be described in four dimensions, since during the handle attachment the slice of space time is not homeomorphic to the base manifold anymore. The $4$-dimensional cobordism can be described by means of a {\it Kirby diagram} \cite{4Manifolds}. In a Kirby diagram, a $1$-handle is depicted by drawing two disjoint balls that represent the attaching spheres of the handle. Then a braided network can be drawn unperturbed far from the $1$-handle, while the two braids that merge are not drawn, as they interact on the $1$-handle.

\section{Isotopy invariants of braided networks}

The fact that isotopy classes of trivalent braided networks are used in the scheme described above implicitly raises two important issues. The first relates to the possibility of associating conserved quantities to particles embedded in braided networks. Secondly, we ideally want to consider invariants that distinguish different particles in Bilson-Thompson's model, and different configurations of particle networks. In the previous literature the invariant link has been used to argue that particles correspond to different states \cite{BT2}. This is a link-valued invariant of braided networks, in the sense that it associates a topological link to a given braided network. In \cite{JH}, the author has constructed certain topologies on braided networks that distinguish different topological configurations associated to particles in Bilson-Thompson's model. We follow a different paradigm in the present article, which is particularly suitable to distinguish different braided network configurations. We utilize, in fact, algebraic invariants of isotopy classes of braided networks based on Kauffman-Lins recoupling theory \cite{KL}. We provide a formulation of invariants closely related to those of \cite{Yokota,Handlebodies}, where we do not preserve Reidemeister move I, but rather we have invariance under the framed Reidemeister move I. In fact, the invariants given by Mizusawa and Murakami in \cite{Handlebodies} are suitable as well in the precise formulation thereby given, although twists would be neglected and the particles in Bilson-Thompson model would be indistinguishable. Topologically, the isotopy classes of Bilson-Thompson particles can be distinguished by their combed position. Since during the evolution with respect to time the bradied network might be substantially deformed (for instance by $1$-handle slides), it might not be straightforward to individuate the particles in a braided network, and having some global factor in the invariant provides information that might not be immediately retrived.  One further subtlety concerns the fact that Matsuzaki's moves treat embeddings in the $3$-space $\mathbb R^3$, or its compactification $\mathbb S^3$. For different manifolds, which are not simply connected, some differences arise, such as the fact that there are different types of unknotted surfaces (depending on which holes they wrap around), but the local moves remain unchanged, and the same paradigm can be applied.\\

Now, we show that classical results of Kauffman and Lins, suit well our machinery. We fix a natural number $n\geq 1$, and let $q$ be a $2n^{th}$ root of unity. We let each trinion of a trivalent braided network correspond to a spin network vertex where the Jones-Wenzl projector is used \cite{KL}, Section 4, Definition 3 with the integer triple $\{a, b, c\}$ which is $q$-admissible, i.e. $a+b+c$ is even, $a+b-c, b+c-a, c+a-b\geq 0$ and $a+b+c \leq 2n -4$. In this perspective, each ribbon of a ribbon network consists of $a$ edges for some $a$, and the crossings are ``cabled'' as well. We consider all $q$-admissible triples at each vertex, and normalize the vertices by dividing by their $\theta$ values divided by the traces $\Delta_i$ of the projectors at the incident edges. In fact, the trinions so considered are orthonormal bases, as seen in \cite{Lick2}, of the skein vector space with three boundary arcs. See also \cite{Yokota}, and compare with \cite{Handlebodies}, where this construction appears as well. 
 Then, using the skein relation for the Jones polynomial at a root of unity $A$ (related to $q$ via $A^2 = q$ as usual), and the recursive definition of the Jones-Wenzl projector, we can associate a numerical value $KL_n (\mathcal N)$ to any braided network as follows. Each twisted edge is written as a power of $A$ times a double half twist, as observed before, and each vertex is multiplied by its conjugate, determined as in \cite{Lick2,Yokota} by taking the mirror image and multiplying the results of smoothing all crossings and iterating the Jones-Wenzl projectors. For each fixed choice of admissible triples, this value has been proven to be a framed isotopy invariant of trivalent graphs up to powers of the value $A$. See \cite{KL}, Section 4. We argue that indeed this is an isotopy invariant of braided networks, by verifying that Matsuzaki's moves are preserved. Moreover, the extra powers of $A$ (or $q$) that appear in the statement of the original result of Kauffman and Lins, in our setting, play the role of counting the twists of the edges. In what follows, we will use a notational convention to refer to braids in our computations. Namely, we will suppose the braids being placed vertically as in Figure~\ref{helonmodel} and Figure~\ref{bosons}, and algebraically described as composition of twists  and crossings as in a representation of the framed braid group. A twist will be denoted by the symbol $\theta$, while a crossing is denoted by $R$. Horizontal juxtaposition is indicated as a tensor product, and the straight ribbon (i.e. the identity element of the framed braid group) is indicated by $|^{2n}$, where $2$ refers to the two edges of the ribbon. For example, with $m=2$, the positron on the top-left of Figure~\ref{helonmodel} is given by $(|^2\otimes R)(R^{-1}\otimes |^2)\theta^{\otimes 3}$, where the composition symbol has been omitted, and reading from right to left corresponds to braids from top to bottom. \\

\begin{theorem}\label{thm:invariant}
Let $\mathcal N$ be a trivalent (oriented) braided network and let $n\geq 1$. Then, $KL_n (\mathcal N)$ is an isotopy invariant of $\mathcal N$.
\end{theorem}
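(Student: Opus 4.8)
The plan is to reduce the statement to a \emph{move-by-move} verification. By Matsuzaki's theorem \cite{Matsuzaki}, two diagrams represent isotopic oriented braided networks if and only if they are related by a finite sequence of plane isotopies together with the three families of local moves recalled above: the framed (oriented) Reidemeister moves of \cite{Ohtsuki}, the slides of a trinion over and under a ribbon, and Ishii's IH move \cite{Ishii} (the exchange move of \cite{BT2}). Since we restrict to oriented surfaces, no vertex-twist (half-twist) moves enter this list, so it suffices to show that the numerical quantity $KL_n(\mathcal N)$ is unchanged under each of the generating moves just named. The starting point is the classical result of Kauffman and Lins \cite{KL}, Section~4: for each fixed assignment of $q$-admissible triples to the trinions, the recoupling evaluation is a framed isotopy invariant of trivalent graphs \emph{up to powers of} $A$. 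My first step is therefore to fix this freedom by the bookkeeping introduced before the statement — writing each twisted edge as a power of $A$ times a double half-twist, and multiplying each vertex by its conjugate (the mirror image with all crossings smoothed and the Jones-Wenzl projectors iterated) — so that the leftover powers of $A$ are interpreted unambiguously as the twist count of the edges. With this convention, $KL_n(\mathcal N)$ is the normalized sum over all admissible internal labelings, the vertices being divided by their $\theta$-values over the traces $\Delta_i$ of the incident projectors.

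Given this set-up, the framed Reidemeister moves are the most direct: RII and RIII follow from the Temperley--Lieb skein relation and the defining idempotent property of the Jones--Wenzl projector (turnbacks are killed, which is exactly what the cabling needs), so each fixed-labeling contribution is a regular isotopy invariant; the framed RI (kink) contributes precisely a power of $A$, which by construction is recorded as a unit of twist and so leaves the normalized value fixed. The trinion slides over and under ribbons are next: here the crossings created by pushing a vertex past a strand are resolved by the skein relation, and one checks using the idempotence and absorption of the Jones--Wenzl projectors under cabled crossings that the resulting contribution equals the original up to the same twist-counting powers of $A$; summing over admissible triples with the $\theta/\Delta_i$ normalization then yields genuine equality.

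The crux — and the step I expect to be the main obstacle — is the IH move, since this is the one genuine \emph{recoupling} move, where an internal edge is contracted and re-expanded in the transverse channel. This is governed by the Kauffman--Lins $6j$-symbols (tetrahedral nets divided by $\theta$-nets), and the whole point of normalizing each trinion by its $\theta$-value over the incident traces $\Delta_i$ is to make the recoupling identity hold as an exact equality of normalized sums, not merely up to a scalar. Concretely, I would invoke the fact (Lickorish \cite{Lick2}, see also \cite{Yokota,Handlebodies}) that the normalized trinions form an orthonormal basis of the skein vector space with three boundary arcs, so that the IH move is realized as an orthogonal change of basis; the orthogonality relations for the recoupling coefficients then guarantee that the sum over all admissible internal labels is unchanged. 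The delicate points to watch here are that the admissibility constraint $a+b+c\le 2n-4$ is preserved by the recoupling (so that no label escapes the truncated range at the chosen $2n^{\text{th}}$ root of unity), and that the twist bookkeeping from the previous steps remains consistent through the basis change — i.e. that reconciling the ``up to powers of $A$'' of \cite{KL} with honest equality does not reintroduce a framing ambiguity. Once these are checked, invariance under all of Matsuzaki's generating moves is established, and the theorem follows.
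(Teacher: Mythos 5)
Your proposal is correct and takes essentially the same approach as the paper's proof: a move-by-move verification of Matsuzaki's generators, with framed Reidemeister II/III from the Kauffman bracket and Jones--Wenzl projector properties, framed Reidemeister I absorbed into the twist-counting powers of $A$, trinion slides handled by smoothing at the vertex, and the IH move treated as a unitary change of orthonormal basis in the three-boundary skein space (Lickorish), with the colourings summed over all admissible triples. The only difference is explicitness: where you argue at the level of bookkeeping, the paper carries out the kink computation concretely, showing a cabled kink evaluates to $(-1)^{3m}A^{\pm 3m}$ times a product of crossings so that opposite twists cancel.
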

\begin{proof}
We need to verify that $KL_n (\mathcal N)$ preserves Matsuzaki's moves since, as observed above, braided networks are isotopic if and only if their diagrams are related by a finite sequence of these moves. Invariance under framed Reidemeister moves II and III are the fundamental results of Kauffman in the construction of the Jones polynomial from the (now called) Kauffman bracket polynomial. Sliding ribbons above and below trivalent networks is seen to preserve the value of $KL_n$ as a consequence of framed Reidemeister moves II and III, once the recursive definition of Jones-Wenzl projector is applied at the vertex. In fact, after smoothing, one obtains that the lines of a ribbon overpass/underpass closed Jordan curves, or cups and caps. Either way, using the Reidemeister moves II and III one can slide the ribbon as required. This is also observed in Section~4.3 of \cite{KL}, and it works for any fixed admissible colouring. We need to verify the framed Reidemeister move I (i.e. cancellation of kinks). This does not appear, at least in the form needed for our purposes, explicitly in the work of \cite{KL}. We proceed as follows. First, consider a single strand self-crossing to give a positive twist. Smoothing at the single crossing we obtain $A\cdot \circ | + A^{-1}\cdot |$, where $\circ$ indicates a closed Jordan curve, $|$ denotes a straight line and we use a $\cdot$ symbol to separate scalars from diagrams they refer to. Using the renormalization value for $\circ = -A^2-A^{-2}$, we find the value $-A^3\cdot |$. Applying the same procedure to the self-crossing that generates negative loops, one obtains $A^{-3}\cdot |$. So, concatenation of positive and negative kinks gives the straight strand $|$, as expected. Now, in general, the ribbons of the braided network $\mathcal N$ correspond to groups of more strands, so the positive loops consist of parallel strands that twist together to give a self-crossing. We show how to proceed with $m=2$ strands, since a simple induction generalizes this step to all values of $m$. Using the computation for the positive and negative twists just obtained, we see that a single line can be slid above and below a kink, since both evaluations simply give $-A^{\pm3}\cdot R$, where $R^{\pm}$ indicates a crossing of two lines (positive or negative). A self-crossing with two strands twisted positively, which we indicate with the symbol $\theta_2$, can therefore be rewritten as the tangle
\begin{equation}
   \theta_2 =  (d\otimes |)(|\otimes |\otimes R)(|\otimes R\otimes |)(b\otimes |\otimes |)(\theta_1 \otimes |),
\end{equation}
where $b$ is the birth (i.e. cap) tangle, $d$ is the death (i.e. cup) tangle,  $\theta_1 = (d\otimes |)(|\otimes R)(b\otimes |)$ is the decomposition of (positive) twist tangle into compositions of generators of the tangle category. Using $\theta = -A^3\cdot |$, and by Reidemeister moves II and III, the previous expression becomes $-A^6\cdot R^2$. Similar computations show that the negative twist is written as $-A^{-6}\cdot R^{-2}$. The concatenation of positive and negative twists gives the straight lines $||$. In the general case for $\theta_{m}$, a self-crossing of $m$ strands, one obtains
\begin{equation}
  \theta_{m} = (-1)^{3m}A^{\pm 3m} \cdot (|^{m-2}\otimes R^\pm)\cdots(R^\pm\otimes |^{m-2})(R^\pm\otimes |^{m-2})\cdots(|^{m-2}\otimes R^\pm),  
\end{equation}
where $+$ or $-$ depends on whether we consider positive or negative twists, respectively, where the symbol $|^m$ indicates $m$ straight parallel lines. It follows that the composition of twists for arbitrary $m$ gives $|^m$. In this case, as before, the argument works for any fixed admissible colouring. To show invariance under IH move, one considers that this corresponds to a change of an orthonormal basis to another one, and therefore consists of applying a unitary matrix, see \cite{Lick2}, and the value of $KL_{n}(\mathcal N)$ does not change. In this case we need to let the colourings vary over all the admissible possibilities. 
\end{proof}

\bigskip 

\subsection{Examples and Computations}

The computation of twists in Theorem~\ref{thm:invariant} suggests that the invariant $KL_n$ distinguishes, in general, the particles of Bilson-Thompson model, in the sense that two braided networks that are identical, except for two regions that enclose one of the particles of Figure~\ref{helonmodel} or Figure~\ref{bosons}, generally have different values of $KL_n$. Of course, in order to give a specific statement of this sort, regarding what particles are distinguished, one needs to fix the values of $q$ (i.e. $A$) and $n$. It is not the scope of the present article to fix values for them, but we rather are contented with showing that the theory is nontrivial. \\

We observe that no isotopy invariant of braided networks is suitable to distinguish $Z^0$ and $\gamma$ bosons, as they differ by an application of the framed Reidemeister move I which, by construction, is preserved by any isotopy invariant of braided networks. \\

By means of example we show that the invariant $KL_4$ does change when we replace a boson $\gamma$ by a boson $W^+$, so that the algebraic setting detects the topological non-equivalence of different bosons. Setting $n=4$, means that each ribbon edge is represented by parallel lines whose total sum does not exceed $6$. Since the two networks are assumed to be identical outside the region enclosing the two particles $W^+$ and $\gamma$, we can substantially compute the invariant $KL_4$ of $W^+$ and $\gamma$ as depicted in Figure~\ref{bosons}, as the computation of $KL_4$ would give identical values, when computing the rest of the braided network. We rewrite the twists of $W^+$ using the computation performed in the proof of Theorem~\ref{thm:invariant}. The computations are identical for each colouring, with the only difference being that one needs to take into account the number of strands in the twists. For example, trinions all coloured with two strands (the triple $\{2,2,2\}$ is $q$-admissible), gives us an overall factor of $A^{18}$ and the braids of $W^+$ are replaced by $R^2\otimes R^2\otimes R^2$, two consecutive crossings that braid the edges of each ribbon. We perform now smoothing of each crossing and utilize the definition of the Kauffman bracket polynomial. Observe that performing the two smoothings on a ribbon produces two diagrams, one with a straight ribbon and an extra factor of $A^2$, and one with the ribbon removed and a factor of $1-A^{-4}$. The computation is therefore reduced simply to the product of these factors. We denote by $|^i$ the braid with $i$ straight ribbons, where $i = 0,1,2,3$. We obtain that the braid $W^+$ can be replaced in the braided network by the sum of terms 
\begin{equation}\label{eqn:bosonW}
   W^+ = -A^{24}\cdot |^6 - A^{18}\cdot [A^2(1-A^{-4}) + 2A^4(1-A^{-4})]\cdot |^4 - A^{18}[(1-A^{-4}) + 2A^2(1-A^{-4})^2]\cdot |^2 - A^{18}(1-A^{-4})^3\cdot  |^0.
\end{equation}

By definition, we have that the evaluation with the boson $\gamma$, with the same colour, is simply $|^6$, which appears up to a factor in Equation~(\ref{eqn:bosonW}). When computing $KL_4$ of $W^+$, the terms with lower powers of $|$ disappear, since they contain turnbacks that are annihilated by the Jones-Wenzl projectors. In general, considering all colourings, $KL_4$ distinguishes the two bosons, as previously claimed. We remark that in the previous computation we have pairs of ribbons or single ribbons, instead of triples representing framed braids. In general, this is a middle step of a computation and, therefore, it is not to be considered as a physically meaningful configuration of a braided network.\\

It has been pointed out, in \cite{BT2}, that in order to introduce a nontrivial dynamics for braided particles, it is necessary to consider moves such that (what we have here called) braided networks do not constitute noiseless subsystems. In practice, what the authors of the article \cite{BT2} have realized, is that it is not enough to consider braided networks as arising from triangulations of manifolds and only local moves that are derived from Pachner moves. A natural question to ask, is whether performing surgery on manifolds to introduce dynamics represents an example of such a paradigm. We argue that this is indeed the case, and that the invariant $KL_n$ shows this. In fact, (very tedious) computations show that there are cases where the interaction/decay of two particles in a braided network, obtained via surgery as described above, produces another braided network whose $KL_n$ invariant is different from the initial one. Applying Theorem~\ref{thm:invariant} it follows that initial and final state are not isotopy equivalent and, therefore, the topology of the system has been changed during surgery, giving an affirmative answer to the previous question.  \\

We briefly describe the computations related to electron/positron annihilation producing one virtual photon. Recall that, in the setting of the present article, virtual means that the photon produced in the process lives on the handle that mediates the interaction. We set $n = 4$, and consider the colouring where all edges are assigned value $2$ (the remaining colours are computed in the same way). Let us indicate the value associated to this colouring by $KL^2_4(\mathcal N)$. Suppose that $e^+$ and $e^-$ are right-handed. First, observe that the value of two braided networks that are identical outside two regions containing $e^+$ and $e^-$ differ by a multiplication of a (Laurent) polynomial $f(A^{\pm 1})$, where $A$ refers to the positron, and $A^{-1}$ refers to the electron. Therefore, to compute $KL^2_4(\mathcal N)$, for a braided network $\mathcal N$ containing a positron and an electron, we can perform the computation of $KL^2_4(e^-)$ for the electron, and obtain the contribution of $e^+$ simply by changing the variable $A^{-1}$ to its inverse $A$. This computation can be simplified as follows. One first uses the formula obtained in Theorem~\ref{thm:invariant} to rewrite the twisting of each ribbon as composition of two negative crossings. This produces an overall factor of $-A^{-18}$ multiplying the braid corresponding to $e^-$. Now, we obtain a braid in pure twist form by applying two rotations of trinions \cite{BT2009}. We remark, here, that we can perform this operation by means only of framed Reidemeister moves, without twisting the vertex, or using the IH move, since the number of twists is even (due to the orientability assumption discussed above) \cite{Matsuzaki}. No overall multiplying factor is introduced during this process. We can also see this by applying Proposition~3 in Section 4.2 of \cite{KL}, since two opposite rotations would be utilized. In pure twist form, the right handed electron $e^-$ can be written as $[0,-2,-1]$, see table at page 12 of \cite{BT2009}. Now, we use the computation for the composition of two negative crossings $R^{-2} = A^{-2} \cdot |^2 + (1-A^4)\cdot |^0$, as obtained in Theorem~\ref{thm:invariant}, to complete the computation. This requires three iterations to give the result 
\begin{equation}\label{eqn:e-}
 KL^2_4(e^-) = -A^{-24}\cdot KL^2_4( |^6) + A^{-18}Q(A)\cdot KL^2_4(|^4) + A^{-18}P(A)(1-A^4)\cdot KL^2_4(|^2),   
\end{equation}
 where 
 \begin{equation}\label{def:PQ}
     P(A) := A^{-18}[2A^{-2}(1-A^4) - (1-A^4)(-A^2-A^{-2})],\ \  Q(A) := A^{-22}(1-A^4) - P(A)A^{-2}.
 \end{equation}
 
  As previously observed, we also have 
 \begin{equation}\label{eqn:e+}
  KL^2_4(e^+) = -A^{24}\cdot KL^2_4(|^6) + A^{18}Q(A^{-1})\cdot KL^2_4(|^4) + A^{18}P(A^{-1})(1-A^{-4})\cdot KL^2_4(|^2),  
 \end{equation}
 where the polynomials $P$ and $Q$ are defined in Equation~(\ref{def:PQ}).
  Let $\mathcal N$ be a braided network containing $e^-$ and $e^+$ and suppose that, via surgery on the base manifold, $\mathcal N$ undergoes an annihilation of $e^-$ and $e^+$ on an attached handle (over which a virtual boson is exchanged). Let us denote by $\mathcal N'$ the braided network obtained after surgery, and by $\hat{\mathcal N}$ the network obtianed from $\mathcal N$ by removing $e^+$ and $e^-$ and replacing them with two particles $\gamma$. From the computations giving Equation~(\ref{eqn:e-}) and Equation~(\ref{eqn:e+}) we have that 
\begin{equation}\label{eqn:annihilation}
 KL^2_4(\mathcal N) = KL^2_4(\hat{\mathcal N}) + KL^2_4(\text{ terms\ containing}\ \ Q(A^\pm), P(A^\pm))\,.   
\end{equation}
 
 The extra terms in Equation~(\ref{eqn:annihilation}) correspond to the braids with two ribbons, or a single ribbon, after the smoothings of $e^\pm$, and their coefficients do not generally vanish identically, but their contribution to $KL_4$ evaluations are trivial, since they contain turnbacks annihilated by the Jones-Wenzl projector. Summing the contributions due to the other $q$-admissible colourings, we see directly that the surgery corresponding to annihilation electron/positron changes the isotopy class of the braided network since, in general, we have that $KL_4(\hat{\mathcal N})$ is different from $KL_4(\mathcal N')$. 
 
\section{Possible pre-geometric unification of matter and gravity}
\label{Sec:Pre-geometry}
In this final section, we speculate how the emergence of matter may be naturally related to the emergence of gravity, out from the Planck realm. Here the quantum fluctuations of the fields destroy the common notion of space-time arena, and call for a totally new theoretical perspective. While considering matter to be constituted by braids of pre-geometric fields, one assumes automatically a framework that is the same as that adopted in emergent gravity scenarios. Thus gravity and matter can be finally treated on the same footing. At the same time, describing matter requires the possibility to account for particles interaction. Letting features of matter remain preserved by the (pre-geometric) dynamics, prevents from instantiating a physical description of particle interaction. A nontrivial dynamics can be rather introduced, considering moves for which braids, now representing particles, do not constitute noiseless subsystems, and the underlying topology is therefore not preserved. \\

Several concepts percolate into the definition of particles. In general, particles are defined asymptotically, on Cauchy-surfaces, within the boundary formalism. Resorting to cobordism, particles, as any states, must be represented on the boundary of the 2-complexes associated to the space-time bulk. Within this picture, braids representing real particles are localised on the boundaries. Selection rules apply to the big picture, requiring a translation in terms of topological invariants. The preservation of the associated quantum numbers then instantiates the rules governing the dynamical processes that are allowed within the pre-geometric frame-work. The example provided by the electron-positron annihilation elucidates this concept concerning the preservation of the quantum lepton number. \\  

Matter and gravity arising from pre-geometry calls for a profound rethinking and a deeper understanding of the emergence of structures and forces, since particles are now thought of in terms of their complexity, no more depicted without extension, and gravity arises due to interacting matter, its source and reason of being. A phase transition can be naturally thought to be advocated in order to accomplish for the emergence of braiding in the pre-geometry. A way this phase transition can be realized is by accounting for a thermal flux that, while bringing the pre-geometric entities to equilibrium, restore at lower energy scales the quantum constraints that define gravity, and through the scalar constraint automatically instantiate braided matter interaction. \\

On the one side, it can be argued that the handle prescription we have introduced in this paper is a non-local instantiation of the scalar Hamiltonian constraint, when irreducible representations of the holonomies assigned to the links of the 1-complexes are quantum group like. Then ribbons and fat graphs are taken into account, and the scalar Hamiltonian constraint may generate the handle attachment. On the other side, it is possible to think of the very same emergence of braided matter in terms of a quantum dynamics that breaks diffeo-invariance, out of the equilibrium, while the dynamical symmetry breaking is taking place. Technically, one might account in different ways for the evolution of holonomies and Wilson lines, so to encode braids nucleation, including dual techniques borrowed from either string theories or stochastic quantum field theory methods. \\      

The emergence of gravity seems therefore to be attained differently than for braids. The latter ones are rather inserted through the action of knotting and braiding  operators, which still preserve triangulation invariance. Gravity, on the other side, arises as the byproduct of the interaction of particles. In other words, it only materialises when we consider particle interaction, which can be only made possible by handle attachment. On the other hand, handle attachment is instantiated at the quantum level by the action of the Hamiltonian constraint, so it traces back to the consideration of the full constrained system that realize the symmetries and the dynamics of gravity. The seems to be on the same line than the considerations made by Bilson-Thompson, Markopoulou and Smolin in \cite{BT2}, at the end of the article. Matter remains a gravitational noiseless subsystem, as long as particle interaction, i.e. handle attachment, is considered. \\ 

According to this picture, the emergence of matter is then ascribed to the formation of braids, into portions of the braided network. As reminded by Rovelli \cite{Rovelli:2010wq}, spin networks are not usually considered to be embedded in $3$-manifolds, but rather intended as abstract graphs, on which knotting and braiding act without need of any physical notion. Within the Bilson-Thompson model, the only available physical explanation is actually the one that provides the base to the concept of particles, and hence allows for the emergence of matter. The very same formation of braided matter, which requires the use of quantum groups, also naturally instantiates an infrared regulator for quantum gravity. To unveil this phenomenon, it is essential to remind that the deformation parameter, when is a root-of-unity, automatically limits to a maximum weight the irreducible representations of quantum groups involved, thus preventing bubble divergences. \\ 

The deformation parameter of the quantum groups adopted in $3$-dimensional and $4$-dimensional quantum gravity has been hitherto conjectured to be connected to the cosmological constant. But the presence of a bare cosmological constant is inessential to the physics, since several quantum mechanisms can be advocated to get rid of its classical value. 
At the same time, possible explanation involve also BCS condensate of fermionic matter, and a wealth of other alternatives that encode bosonic condensates and cosmological plasma physics. Nonetheless, the perspective we are pushing forward here is that the regulator is rather justified by the emergence of matter, described by the quantum groups. In other words, matter is the natural regulator of gravity, and this becomes evident due to the quantum groups representations that are required by the braided matter. \\ 

\section{Discussion}
\label{Sec:Discussion}

How matter is to be incorporated into a viable theory of quantum gravity remains an important open question. One interesting proposal, pursued by several authors, is to embed the topological toy model of Bilson-Thompson in which leptons and quarks are represented as simple braids, within a class of background independent theories of quantum gravity based on spin networks (or their generalisation). One major obstacle in the development of this novel framework has been the lack of any suitable dynamics reminiscent of the particle interactions observed in the laboratory. It turns out that the Pachner moves which generate the dynamics on spin networks do not give rise to the dynamics governing particle interactions. The main theoretical result of this paper has been to demonstrate, by supplementing the local Pachner moves with a non-local graph changing 1-handle attachments, that it is possible to introduce suitable particle interactions topologically via the braid product, whilst preserving the original simple classification of fermions proposed by Bilson-Thompson.\\

While our result develops an appealing theoretical framework, we stress that at this early stage we do not have a satisfactory or complete dynamical theory. It is not clear at this point in time how specific amplitudes for particle processes may be calculated within this purely topological construction, nor how the Higgs boson, or PMNS and CKM matrices may arise. Ultimately, any convincing theory of matter and quantum gravity must address these important questions, and make actual predictions that can be tested experimentally. Despite these current shortcomings, we believe this approach to pre-geometrically unifying matter with gravity is worth investigating further, and only a more complete study of this framework can establish its viability.\\

Among the points to bear in mind for forthcoming investigations, the issue of implementing a covariant representation that encodes quantum fluctuations, without requesting the splitting among space and time, is one of the most urgent. A possible way to overcome this problem could be considering a braided surface that is embedded in a $4$-dimensional space. Deepening the link among braided matter and the emergence of an infrared regulator for gravity is as crucial as the previous point to this line of research, and will deserve detailed forthcoming studies. Finally, demonstrating that the handle attachment hereby introduced instantiates a non-local realization of the scalar Hamiltonian  constraint, would be a major corroboration of the idea that the emergence of matter and gravity are intrinsically related.

\section*{Acknowledgement} 
 We thank the anonymous Referee for comments and suggestions that helped improve the manuscript. 
NG was supported by the Natural Science Foundation of the Jiangsu Higher Education Institutions of China Programme Grant 19KJB140018 and XJTLU REF-18-02-03 Grant. AM wishes to acknowledge support by the NSFC, through the grant No. 11875113, the Shanghai Municipality, through the grant No. KBH1512299, and by Fudan University, through the grant No. JJH1512105. EZ was supported by the Estonian Research Council through the grant MOBJD679.

\end{document}